\newtheorem{observation}[theorem]{Observation}
\begin{document}
\title{The Fault-Tolerant Metric Dimension of Cographs}
%
%\titlerunning{Abbreviated paper title}
% If the paper title is too long for the running head, you can set
% an abbreviated paper title here 
%  
\author{Duygu Vietz\inst{1}\orcidID{0000-0001-6881-7832} \and
Egon Wanke\inst{1}}
\authorrunning{D. Vietz et al.}
% First names are abbreviated in the running head.
% If there are more than two authors, 'et al.' is used.
%
\institute{Heinrich-Heine-University Duesseldorf, Universitaetsstr. 1, 40225 Duesseldorf, Germany }  

\maketitle              % typeset the header of the contribution
\begin{abstract}
A vertex set $U \subseteq V$ of an undirected graph $G=(V,E)$ is a \textit{resolving set} for $G$ if for every two distinct vertices $u,v \in V$ there is a vertex $w \in U$ such that the distance between $u$ and $w$ and the distance between $v$ and $w$ are different. A resolving set $U$ is {\em fault-tolerant} if for every vertex $u\in U$ set $U\setminus \{u\}$ is still a resolving set. {The \em (fault-tolerant) Metric Dimension} of $G$ is the size of a smallest (fault-tolerant) resolving set for $G$. The {\em weighted (fault-tolerant) Metric Dimension} for a given cost function $c: V \longrightarrow \mathbb{R}_+$ is the minimum weight of all (fault-tolerant) resolving sets. Deciding whether a given graph $G$ has (fault-tolerant) Metric Dimension at most $k$ for some integer $k$ is known to be NP-complete. The weighted fault-tolerant Metric Dimension problem has not been studied extensively so far. In this paper we show that the weighted fault-tolerant metric dimension problem can be solved in linear time on cographs. 

\keywords{Graph algorithm, Complexity, Metric Dimension, Fault-tolerant Metric Dimension, Resolving Set, Cograph}
\end{abstract}
\section{Introduction}  
\label{secIntro}  

An undirected graph $G=(V,E)$ has {\em metric dimension at most $k$} if there is a vertex set $U \subseteq V$ such that $|U| \leq k$ and $\forall u,v \in V$, $u \not=v$, there is a vertex $w \in U$ such that $d_G(w,u) \not= d_G(w,v)$, where $d_G(u,v)$ is the distance (the length of a shortest path in an unweighted graph) between $u$ and $v$. We call $U$ a {\em resolving set}. Graph $G$ has {\em fault-tolerant metric dimension at most $k$} if for a resolving set $U$ with $|U|\leq k$ it holds that for every $u\in U$ set $U \setminus \{u\}$ is a resolving set for $G$.
The metric dimension of $G$ is the smallest integer $k$ such that $G$ has metric dimension at most $k$ and the fault-tolerant metric dimension of $G$ is the smallest integer $k$ such that $G$ has fault-tolerant metric dimension at most $k$. The metric dimension was independently introduced by Harary, Melter \cite{HM76} and Slater \cite{Sla75}.

If for three vertices $u,v\in V$, $w\in U$, we have $d_G(w,u) \not= d_G(w,v)$, then we say that $u$ and $v$ are {\em resolved} by vertex $w$. The {\em metric dimension} of $G$ is the size of a minimum resolving set and the {\em fault-tolerant metric dimension} is the size of a minimum fault-tolerant resolving set. In certain applications, the vertices of a (fault-tolerant) resolving set are also called {\em resolving vertices}, {\em landmark nodes} or {\em anchor nodes}. This is a common naming particularly in the theory of sensor networks. 

Determining the metric dimension of a graph is a problem that has an impact on multiple research fields such as chemistry \cite{CEJO00}, robotics \cite{KRR96}, combinatorial optimization \cite{ST04} and sensor networks \cite{HW12}. Deciding whether a given graph $G$ has metric dimension at most $k$ for a given integer $k$ is known to be NP-complete for general graphs \cite{GJ79}, planar graphs \cite{DPSL12}, even for those with maximum degree 6 and Gabriel unit disk graphs \cite{HW12}. Epstein et al.\ showed the NP-completeness for split graphs, bipartite graphs, co-bipartite graphs and line graphs of bipartite graphs \cite{epstein2015weighted} and Foucaud et al.\ for permutation and interval graphs \cite{foucaud2015algorithms}\cite{foucaud2017identification}.

There are several algorithms for computing the metric dimension in polynomial time for special classes of graphs, as for example for trees \cite{CEJO00,KRR96}, wheels \cite{HMPSCP05}, grid graphs \cite{MT84}, $k$-regular bipartite graphs \cite{SBSSB11}, amalgamation of cycles \cite{IBSS10}, outerplanar graphs \cite{DPSL12}, cactus block graphs \cite{hoffmann2016linear}, chain graphs \cite{fernau2015computing}, graphs with a bounded number of resolving vertices in every EBC \cite{VHW19}. The approximability of the metric dimension has been studied for bounded degree, dense, and general graphs in \cite{HSV12}. Upper and lower bounds on the metric dimension are considered in \cite{CGH08,CPZ00} for further classes of graphs.

There are many variants of the Metric Dimension problem.  The weighted version was introduced by Epstein et al.\ in \cite{epstein2015weighted}, where they gave a polynomial-time algorithms on paths, trees and cographs. Hernando et al.\ investigated the fault-tolerant Metric Dimension in \cite{hernando2008fault}, Estrada-Moreno et al.\ the $k$-metric Dimension in \cite{estrada2013k} and Oellermann et al.\ the strong metric Dimension in \cite{oellermann2007strong}.
 
The parameterized complexity was investigated by Hartung and Nichterlein. They showed that for the standard parameter the problem is $W[2]$-complete on general graphs, even for those with maximum degree at most three \cite{hartung2013parameterized}. Foucaud et al.\ showed that for interval graphs the problem is FPT for the standard parameter \cite{foucaud2015algorithms}\cite{foucaud2017identification}. Afterwards Belmonte et al.\ extended this result to the class of graphs with bounded treelength, which is a superclass of interval graphs and also includes chordal, permutation and AT-free graphs \cite{belmonte2017metric}. 
  
\bigskip 
In this paper we show that the weighted fault-tolerant metric dimension problem can be solved in linear time on cographs and give an algorithm that computes a minimum weight fault-tolerant resolving set.

\section{Definitions and Basic Terminology}
\label{secDef}

We consider {\em graphs} $G=(V,E)$, where $V$ is the set of {\em vertices} and $E$ is the set of {\ edges}. We distinguish between {\em undirected graphs} with edge sets $E \subseteq \{\{u,v\}~|~u,v \in V,~u \not=v\}$ and {\em directed graphs} with edge sets $E \subseteq V \times V.$ Graph $G'=(V',E')$ is a {\em subgraph} of $G=(V,E)$, if $V' \subseteq V$ and $E' \subseteq E$. It is an {\em induced subgraph} of $G$, denoted by $G|_{V'}$, if $E' = E \cap \{\{u,v\}~|~ u,v \in V'\}$ or $E' = E \cap (V'\times V')$, respectively. Vertex $u \in V$ is called a {\em neighbour} of vertex $v \in V$, if $\{u,v\} \in E$ in an undirected graph or $(u,v)\in E$ ($(v,u) \in E$) in a directed graph. With $N(u) = \{v \ | \ \{u,v\} \in E \}$ we denote the {\em open neighbourhood} of a vertex $u$ in an undirected graph and with $N[u] = N(u) \cup \{u\}$ we denote the {\em closed neighbourhood} of a vertex $u$. 

A sequence of $k+1$ vertices $(u_1,\ldots,u_{k+1})$, $k \geq 0$, $u_i \in V$ for $i=1, \ldots , k+1$, is an {\em undirected path of length $k$}, if $\{u_i,u_{i+1}\} \in E$ for $i=1,\ldots,k$. The vertices $u_1$ and $u_{k+1}$ are the {\em end vertices} of undirected path $p$. The sequence $(u_1,\ldots,u_{k+1})$ is a {\em directed path of length $k$}, if $(u_i,u_{i+1}) \in E$ for $i=1,\ldots,k$. Vertex $u_1$ is the start vertex and vertex $u_{k+1}$ is the end vertex of the directed path $p$. A path $p$ is a {\em simple path} if all vertices are mutually distinct.

An undirected graph $G$ is {\em connected} if there is a path between every pair of vertices. An undirected graph $G$ is {\em disconnected} if it is not connected. A {\em connected component} of an undirected graph $G$ is a connected induced subgraph $G'=(V',E')$ of $G$ such that there is no connected induced subgraph $G''=(V'',E'')$ of $G$ with $V'\subseteq V''$ and $|V'|<|V''|$. A vertex $u \in V$ is a {\em separation vertex} of an undirected graph  $G$ if $G|_{V \setminus \{u\}}$ (the subgraph of $G$ induced by $V \setminus \{ u \} $) has more connected components than $G$. Two paths $p_1=(u_1,\ldots,u_k)$ and $p_2=(v_1,\ldots,v_l)$ are {\em vertex-disjoint} if $\{u_2,\ldots,u_{k-1}\} \cap \{v_2\ldots,v_{l-1}\} = \emptyset$. A graph $G=(V,E)$ with at least three vertices is {\em biconnected}, if for every vertex pair $u,v \in V$, $u \not= v$, there are at least two vertex-disjoint paths between $u$ and $v$. A {\em biconnected component} $G'=(V',E')$ of $G$ is an induced biconnected subgraph of $G$ such that there is no biconnected induced subgraph $G''=(V'',E'')$ of $G$ with $V' \subseteq V''$ and $|V'|<|V''|$.
The {\em distance} $d_G(u,v)$ between two vertices $u,v$ in a connected undirected graph $G$ is the smallest integer $k$ such that there is a path of length $k$ between $u$ and $v$. The {\em distance} $d_G(u,v)$ between two vertices $u,v$ such that there is no path between $u$ and $v$ in $G$ is $\infty$. The {\em complement} of an undirected graph $G=(V,E)$ is the graph $\bar G = (V, \{ \{u,v\}  \; | \; u,v \in V, \{u,v\} \notin E \})$.

%TODO: Dealing with disconnected input graphs
%TODO: Define neighbourhood and symmetric difference (N(u) \cup N(v) ) \setminus ( N(u) \cap N(v) )

\begin{definition}[Cograph]
An undirected Graph $G$ is a {\em cograph}, if
\begin{itemize}
 \item $G=(\{u\}, \emptyset)$ or
 \item $G=(V_1 \cup V_2, E_1 \cup E_2)$ for two cographs $G_1=(V_1,E_1)$ and $G_2=(V_2,E_2)$ or
 \item $G= \overline{H}$ for a cograph $H$.
\end{itemize}
\end{definition}

A cograph contains no induced $P_4$, therefore the diameter of a connected cograph $G$ is at most 2. That is, the distance between two arbitrary verices $u,v$ in $G$ is either 0 or 1 or 2.

\begin{definition}[Resolving set, metric dimension]
Let $G=(V,E)$ be an undirected graph and let $c:V \longrightarrow \mathbb{R}_+$ be a function that assigns to every vertex a non-negative weight. A vertex set $R \subseteq V$ is a {\em resolving set} for $G$ if for every vertex pair $u,v \in V,$ $u \not = v$, there is a vertex $w \in R$ such that $d_G(u,w) \not= d_G(v,w)$. A resolving set $R\subseteq V$ has weight $k \in \mathbb{N}$, if $\sum_{v \in R} c(v)=k$.
The set $R$ is a {\em minimum resolving set for $G$}, if there is no resolving set $R'\subseteq V$ for $G$ with $|R'| < |R|$. The set $R$ is a {\em minimum weight resolving set for $G$}, if there is no resolving set $R'\subseteq V$ for $G$ with $\sum_{v \in R'} c(v) < \sum_{v \in R} c(v)$.  An undirected graph $G=(V,E)$ has {\em metric dimension} $k\in\mathbb{N}$, if $k$ is the smallest positive integer such that there is a resolving set for $G$ of size $k$. An undirected graph $G=(V,E)$ has {\em weighted metric dimension} $k\in\mathbb{N}$ if $k$ is the smallest positive integer such that there is a resolving set for $G$ of weight $k$.
\end{definition}

\begin{definition}[Fault-tolerant resolving set, fault-tolerant metric dimension]
Let $G=(V,E)$ be an undirected graph and let $c:V \longrightarrow \mathbb{R}_+$ be a function that assigns to every vertex a non-negative weight. A vertex set $R \subseteq V$ is a {\em fault-tolerant resolving set} for $G$ if for an arbitrary vertex $r \in R$ set $R\setminus \{r\}$ is a resolving set. A fault-tolerant resolving set $R\subseteq V$ has weight $k \in \mathbb{N}$, if $\sum_{v \in R} c(v)=k$.
The set $R$ is a {\em minimum fault-tolerant resolving set for $G$}, if there is no fault-tolerant resolving set $R'\subseteq V$ for $G$ with $|R'| < |R|$. The set $R$ is a {\em minimum weight fault-tolerant resolving set for $G$}, if there is no fault-tolerant resolving set $R'\subseteq V$ for $G$ with $\sum_{v \in R'} c(v) < \sum_{v \in R} c(v)$. An undirected graph $G=(V,E)$ has {\em fault-tolerant metric dimension} $k\in\mathbb{N}$, if $k$ is the smallest positive integer such that there is a fault-tolerant resolving set for $G$ of size $k$. An undirected graph $G=(V,E)$ has {\em weighted fault-tolerant metric dimension} $k\in\mathbb{N}$, if $k$ is the smallest positive integer such that there is a fault-tolerant 
resolving set for $G$ of weight $k$.
\end{definition}

Equivalent to this definition one can say that a vertex set is a fault-tolerant resolving set if for every vertex pair there are two resolving vertices. Obviously every fault-tolerant resolving set is also a resolving set.

The concept of fault-tolerance can be extended easily on an arbitrary number of vertices, what is called the $k$-metric dimension in \cite{estrada2013k}, $k \in \mathbb{N}$. The $k$-metric dimension is the size of a smallest $k$-resolving set. A $k$-resolving set resolves every pair of vertices at least $k$ times. For $k=1$ a $k$-resolving set is a resolving set and for $k=2$ a $k$-resolving set is a fault-tolerant resolving set. One should note that for all $k>2$ there are graphs that does not have a $k$-resolving set (for example graphs with twin vertices), whereas for $k \leq 2$ the entire vertex set is a $k$-resolving set.

% \begin{definition}[Binary resolving set]
% Let $G=(V,E)$ be an undirected graph. A vertex set $B \subseteq V$ is a {\em binary resolving set} for $G$ if for every pairy $u,v \in V \setminus B$ there is a vertex $b \in B$ such that either $\{u,b\} \in E$ and $\{v,b\} \notin E$ or $\{u,b\} \notin E$ and $\{v,b\} \in E$. \\
% \end{definition}

\begin{definition}
 Let $G=(V,E)$ be an undirected graph and $u,v \in V$, $u \neq v$. For two vertices $u,v \in V$ we call $N(u) \triangle N(v) = (N(u) \cup N(v) ) \setminus ( N(u) \cap N(v))$  the {\em symmetric difference} of $u$ and $v$. For a set $R\subseteq V$, we define the function $$h_R: V \times V \longrightarrow \mathbb{N}, \hspace{1cm} h_R(u,v)=  |(N(u) \triangle N(v) \cup \{u,v\} ) \cap R |$$ 
\end{definition}

$h_R(u,v)$ is the number of vertices in $R$ that are $u$ or $v$ or a neighbour of $u$, but not of $v$ or a neighbour of $v$, but not of $u$.

\begin{definition}[neighbourhood-resolving]
\label{NR}

Let $G=(V,E)$ be an undirected graph and
$u,v \in V$, $u \neq v$, and $R\subseteq V$. Set $R$ is called {\em neighbourhood-resolving} for $G$, if for every pair $u,v \in V$, $u \neq v$, we have $h_R(u,v)\geq 1$.
\end{definition}

% \begin{definition}[Neighbourhood-Resolving (NR)]
%  Let $G=(V,E)$ be an undirected graph. A set $R \subseteq V$ is {\em Neighbourhood-Resolving (NR)} for $G$ if for every pair $u,v$ with $u \in V$, $v \in V \setminus R$ there is a vertex $r \in R$ such that either $\{u,r\} \in E \; \wedge \{v,r\} \notin E$ or $\{u,r\} \notin E \; \wedge \{v,r\} \in E$. \\
% \end{definition}

A set $R$ is neighbourhood-resolving for $G$, if for every two vertices $u,v \notin R$ there is a vertex $w \in R$ that is neighbour of exactly one of the vertices $u$ and $v$. If $u \in R$ or $v\in R$ the value $h_R(u,v)$ is always at least 1. Obviously, every set that is neighbourhood-resolving for $G$ is also a resolving set for $G$.

\begin{definition}[2-neighbourhood-resolving]
\label{2NR}
Let $G=(V,E)$ be an undirected graph and $u,v \in V$, $u \neq v$, and $R\subseteq V$. Set $R$ is called {\em 2-neighbourhood-resolving} for $G$ if for every pair $u,v \in V$, $u \neq v$, we have $h_R(u,v)\geq 2$.
\end{definition}

A set $R$ is 2-neighbourhood-resolving for $G$ if 
\begin{itemize}
 \item for two vertices $u,v \in V\setminus R$ there are at least two vertices in $R$ that are neighbour of exactly one of the vertices $u$ and $v$ and
 \item for two vertices $u,v$ such that $u\in R$ and $v \notin R$ there is at least one vertex in $R$ that is neighbour of exactly one of the vertices $u$ and $v$.
 \end{itemize}

For $u,v \in R$ the value $h_R(u,v)$ is always at least two. Obviously, every set that is 2-neighbourhood-resolving for $G$ is also a fault-tolerant resolving set for $G$.

\begin{lemma}
\label{ft2NR}
Let $G=(V,E)$ be a connected cograph and $R \subseteq V$. Vertex set $R$ is a fault-tolerant resolving set for $G$ if and only if $R$ is 2-neighbourhood-resolving for $G$.
\end{lemma}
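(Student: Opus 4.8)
The plan is to reduce both properties to a single count — the number of vertices of $R$ that resolve a given pair — and to observe that this count is exactly $h_R$, so that the two properties impose the same threshold on it.

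First I would determine exactly which vertices resolve a fixed pair $u,v \in V$. Since $G$ is a connected cograph its diameter is at most $2$, so $d_G(w,x) \in \{0,1,2\}$ for all $w,x$, with $d_G(w,x)=0$ iff $w=x$ and $d_G(w,x)=1$ iff $w \in N(x)$. I claim $w$ resolves $u,v$ if and only if $w \in (N(u)\triangle N(v)) \cup \{u,v\}$. Indeed, if $w=u$ then $d_G(w,u)=0$ while $d_G(w,v)\in\{1,2\}$, so $w$ resolves (and symmetrically for $w=v$); and if $w \notin \{u,v\}$, then $d_G(w,u),d_G(w,v)\in\{1,2\}$, so they differ precisely when $w$ is adjacent to exactly one of $u,v$, i.e.\ $w \in N(u)\triangle N(v)$. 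Hence $h_R(u,v)=|(N(u)\triangle N(v)\cup\{u,v\})\cap R|$ is exactly the number of vertices of $R$ that resolve the pair $u,v$.

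With this identity the equivalence follows by a counting argument on each pair, matching the earlier remark that $R$ is fault-tolerant exactly when every pair is resolved by at least two vertices of $R$. If $R$ is $2$-neighbourhood-resolving, then every pair $u,v$ has $h_R(u,v)\geq 2$; deleting any single $r \in R$ removes at most one resolving vertex of the pair, so $R\setminus\{r\}$ still resolves it, and as this holds for every pair and every $r$, the set $R$ is fault-tolerant. Conversely, if $R$ is not $2$-neighbourhood-resolving, some pair $u,v$ has $h_R(u,v)\leq 1$: if $h_R(u,v)=0$ then $R$ fails to resolve $u,v$ and is not even a resolving set; if $h_R(u,v)=1$ with unique resolving vertex $w \in R$, then $R\setminus\{w\}$ no longer resolves $u,v$, so $R$ is not fault-tolerant.

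The only substantive step is the characterisation of resolving vertices in the first paragraph, and this is precisely where the cograph hypothesis enters through the diameter bound, which collapses all distances into $\{0,1,2\}$. I expect the main (though minor) difficulty to be handling $w=u$ and $w=v$ as separate cases: these vertices resolve the pair by virtue of distance $0$ rather than by membership in the symmetric difference, which is exactly why the term $\{u,v\}$ is adjoined in the definition of $h_R$. Once that case analysis is stated carefully, the remaining counting is routine.
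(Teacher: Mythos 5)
Your proposal is correct and rests on the same key fact as the paper's proof: since a connected cograph has diameter at most $2$, a vertex $w$ resolves a pair $u,v$ exactly when $w\in (N(u)\triangle N(v))\cup\{u,v\}$, so $h_R(u,v)$ counts the resolving vertices of the pair lying in $R$. The paper reaches the same conclusion via a case analysis on whether $u$ and $v$ belong to $R$, in each direction separately; your single characterisation identity merely packages that case analysis more compactly, so the two arguments are essentially identical.
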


\begin{proof}
"$\Rightarrow$": Assume that $R$ is a fault-tolerant resolving set for $G$. We have to show that $R$ is 2-neighbourhood-resolving for $G$, so let $u,v \in V$ and $r_1, r_2 \in R$ be the vertices that resolve $u$ and $v$. 
\begin{enumerate}
 \item If $u,v \in R$, then obviously $h_R(u,v)\geq2$.
 \item If $u \in R$ and $v \notin R$, then either $d_G(u,r_1) \neq 0$ or $d_G(u,r_2) \neq 0$. Without loss of generality let $d_G(u,r_1) \neq 0$. Vertex $v \notin R$, so $d_G(v,r_1) \neq 0$. Since vertex $r_1$ resolves $u,v$ and $G$ is a connected cograph (and therefore the diameter is at most 2), $r_1$ has to be adjacent to exactly one of the vertices $u,v$. Thus, $r_1 \in u\triangle v \cap R $ and $u \in \{u,v\} \cap R$ and therefore $h_R(u,v)\geq 2$.
 \item If $u,v \notin R$, then the distance between $u$ and any vertex in $R$ and the distance between $v$ and any vertex in $R$ is not 0. Since $r_1$ and $r_2$ resolve $u$ and $v$ both are adjacent to exactly one of the vertices $u$ and $v$. Thus $r_1,r_2 \in N(u) \triangle N(v)$ and therefore $h_R(u,v) \geq 2$.
\end{enumerate}

 "$\Leftarrow$": Assume that $R$ is 2-neighbourhood-resolving for $G$. We have to show that $R$ is a fault-tolerant resolving set for $G$. We do this by giving two resolving vertices for every vertex pair $u,v \in V$.
 \begin{enumerate}
  \item If $u,v \in R$, there are obviously two vertices in $R$, which resolve $u$ and $v$.
  \item If $u \in R$ and $v \in V \setminus R$, then $u$ resolves $u,v$. Since $h_R(u,v) \geq 2$ and $|\{u,v\} \cap R|=1$, we have $|N(u) \triangle N(v) \cap R | \geq 1$. Thus, there is a vertex $r \in R$, that is adjacent to exactly one of the vertices $u,v$, so $r$ resolves $u,v$.
  \item If $u,v \in V \setminus R$, then $|\{u,v\} \cap R | =0$. Since $h_R(u,v)\geq 2$, it follows $|N(u) \triangle N(v) \cap R| \geq 2$. Thus, there are two vertices $r_1,r_2 \in R$, that are both adjacent to exactly one of the vertices $u,v$ and so $r_1,r_2$ resolve $u,v$.
 \end{enumerate}
\end{proof}

Note that this equivalence does not apply to disconnected cographs, see Figure \ref{Fig1}.

\begin{figure}
 \centering
 \includegraphics[width = 0.9 \textwidth]{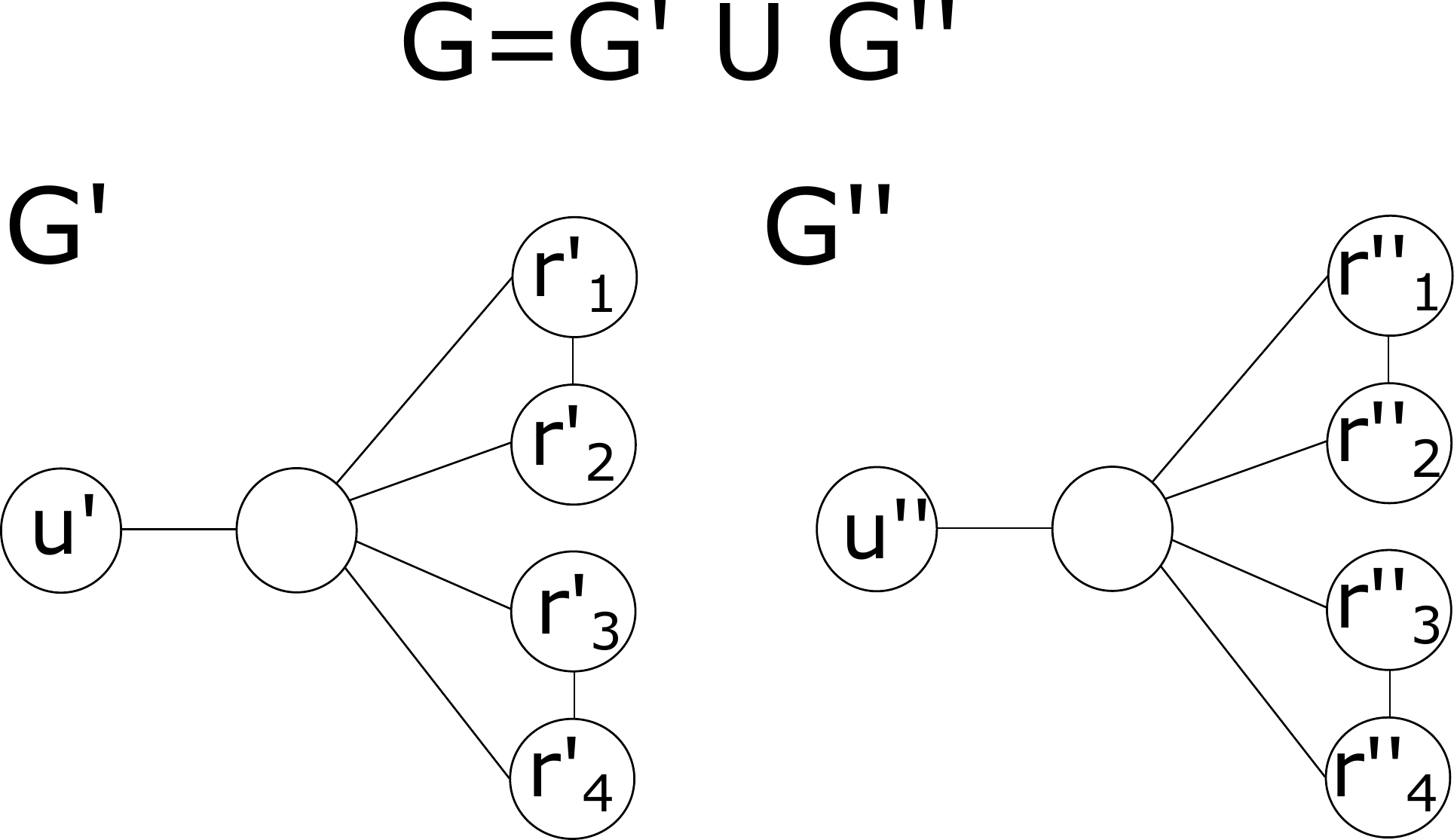}
 \caption{\label{Fig1} The figure shows the disconnected cograph $G=G' \cup G''$, build by the union of the two connected cographs $G'$ and $G''$. Let $R=R' \cup R''$ with $R'= \{r'_1, \ldots , r'_4\}$ and $R''= \{r''_1, \ldots , r''_4\}$. $R'$ is 2-neighbourhood-resolving and a fault-tolerant resolving set for $G'$ and $R''$ is 2-neighbourhood-resolving and a fault-tolerant resolving set for $G''$. $R$ is a fault-tolerant resolving set, but not 2NR for $G$, since $h_R(u',u'')=0$. $R$ is not a fault-tolerant resolving set for $\bar G$, since $u'$ and $u''$ are neighbour of every resolving vertex in $R$ in graph $\bar G$ and therefore cannot be resolved.}
\end{figure} 

Thus, we state that 2-neighbourhood-resolving implies fault-tolerance in a cograph, fault-tolerance implies  2-neighbourhood-resolving in a connected cograph, but not in a disconnected cograph.

\begin{lemma}
\label{comp1}
 Let $G=(V,E)$ be a cograph and $R \subseteq V$. If $R$ is 2-neighbourhood-resolving for $G$, then $R$ is also 2-neighbourhood-resolving for $\bar G$.
\end{lemma}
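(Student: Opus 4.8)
The plan is to prove something slightly stronger and cleaner, namely that the function $h_R$ is \emph{invariant under complementation}: for every pair $u\neq v$, the value $h_R(u,v)$ computed in $G$ equals the value computed in $\bar G$. Since being 2-neighbourhood-resolving is by definition exactly the condition $h_R(u,v)\geq 2$ for all pairs, this invariance yields the lemma at once. (In fact the argument uses nothing about cographs, so it establishes the statement for arbitrary graphs.)

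First I would rewrite the relevant quantity. Writing $N_G$ for the open and $N_G[\,\cdot\,]$ for the closed neighbourhood in $G$, the complement satisfies $N_{\bar G}(u) = V \setminus N_G[u]$, i.e.\ the set-theoretic complement within $V$ of the closed neighbourhood. The key elementary fact is that complementation within a fixed universe preserves symmetric differences: for any $A,B\subseteq V$ one has $(V\setminus A)\triangle (V\setminus B) = A\triangle B$. Applying this with $A=N_G[u]$ and $B=N_G[v]$ gives $N_{\bar G}(u)\triangle N_{\bar G}(v) = N_G[u]\triangle N_G[v]$.

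Next I would compare the symmetric differences of the closed and the open neighbourhoods inside $G$ itself. For any vertex $w\notin\{u,v\}$, membership of $w$ in $N_G[x]$ coincides with membership in $N_G(x)$, so the sets $N_G[u]\triangle N_G[v]$ and $N_G(u)\triangle N_G(v)$ can differ only on the two elements $u$ and $v$. Hence after taking the union with $\{u,v\}$ they become identical, $(N_G[u]\triangle N_G[v])\cup\{u,v\} = (N_G(u)\triangle N_G(v))\cup\{u,v\}$. Combining this with the previous step shows that the set $(N(u)\triangle N(v))\cup\{u,v\}$ appearing in the definition of $h_R$ is literally the \emph{same} set in $G$ and in $\bar G$; intersecting with $R$ and taking cardinalities then gives $h_R^{G}(u,v)=h_R^{\bar G}(u,v)$.

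The only delicate point is the bookkeeping in this middle step, since whether $u,v$ belong to $N_G(u)\triangle N_G(v)$ depends on whether $u$ and $v$ are adjacent. The clean way to sidestep a case analysis is to observe that the union with $\{u,v\}$ in the definition of $h_R$ is precisely tailored to absorb the open-versus-closed discrepancy, making the two sets coincide regardless of adjacency. With the invariance $h_R^{G}=h_R^{\bar G}$ in hand the conclusion is immediate: if $h_R(u,v)\geq 2$ for all pairs in $G$, the same inequality holds in $\bar G$, so $R$ is 2-neighbourhood-resolving for $\bar G$.
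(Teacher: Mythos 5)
Your argument is correct and in fact slightly stronger than what the paper establishes. Both proofs ultimately rest on the same observation --- for a vertex $w \notin \{u,v\}$, complementation merely swaps which of $u,v$ it is adjacent to, so membership in $N(u) \triangle N(v)$ is preserved --- but you package this as a single set identity, $(N_{\bar G}(u) \triangle N_{\bar G}(v)) \cup \{u,v\} = (N_G(u) \triangle N_G(v)) \cup \{u,v\}$, derived from $N_{\bar G}(x) = V \setminus N_G[x]$ together with the fact that complementation within a fixed universe preserves symmetric differences. This yields the exact invariance $h_R^{G} = h_R^{\bar G}$, from which the lemma is immediate. The paper instead proves only the inequality $h_R^{\bar G}(u,v) \geq 2$ by a three-way case analysis on whether $u$ and $v$ themselves lie in $(N(u) \triangle N(v) \cup \{u,v\}) \cap R$, exhibiting in each case two witnesses that survive complementation. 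Your route buys a cleaner and more general statement (the invariance shows the lemma holds with the threshold $2$ replaced by any $k$, and, as you note, uses nothing about cographs --- though the paper's case analysis does not use the cograph hypothesis either), at the cost of a little set algebra; the paper's version is more pedestrian but entirely elementary. Your handling of the one delicate point --- that $N_G[u] \triangle N_G[v]$ and $N_G(u) \triangle N_G(v)$ can differ only on $\{u,v\}$, a discrepancy absorbed by the union with $\{u,v\}$ in the definition of $h_R$ --- is sound, so there is no gap.
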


\begin{proof}
 Let $R \subseteq V$ be 2-neighbourhood-resolving for $G$, i.e. for $u,v \in V$ we have $h_R(u,v) =  |(N(u) \triangle N(v) \cup \{u,v\} ) \cap R | \geq 2$. We distinguish between the following cases:
\begin{enumerate}
 \item $u,v \in (N(u) \triangle N(v) \cup \{u,v\} ) \cap R$:
 Obviously, $u,v \in (N(u) \triangle N(v) \cup \{u,v\} ) \cap R$ in graph $\bar G$ and so $h_R(u,v) \geq 2$ in $\bar G$.
 
 \item $u \in (N(u) \triangle N(v) \cup \{u,v\} ) \cap R$ and $v \notin (N(u) \triangle N(v) \cup \{u,v\} ) \cap R$:
 Since $h_R(u,v) \geq 2$ there has to be a vertex $w \in N(u) \triangle N(v) \cap R$, what implies that $w$ is neighbour of either $u$ or $v$. Without loss of generality let $w$ be a neighbour of $u$. In graph $\bar G$ vertex $w$ is not a neighbour of $u$, but a neighbour of $v$. So, we still have two vertices $u, w \in  (N(u) \triangle N(v) \cup \{u,v\} ) \cap R$ in graph $\bar G$.
 
 \item $u,v \notin (N(u) \triangle N(v) \cup \{u,v\} ) \cap R$: Since $h_R(u,v) \geq 2$ there has to be two vertices $w_1, w_2 \in N(u) \triangle N(v) \cap R$, what implies that both are neighbour of exactly one of the vertices $u,v$. Therefore in graph $\bar G$ they are also neighbour of exactly one of the vertices $u,v$. So, we still have two vertices $w_1, w_2 \in  (N(u) \triangle N(v) \cup \{u,v\} ) \cap R$ in graph $\bar G$.
\end{enumerate}
\end{proof}

 Since 2-neighbourhood-resolving is equivalent to fault-tolerance in connected cographs, we get the following observation:
 
 \begin{observation}
  Let $G=(V,E)$ be a connected cograph and $R \subseteq V$. If $R$ is a fault-tolerant resolving set for $G$, then $R$ is also a fault-tolerant resolving set for the disconnected cograph $\bar G$.
 \end{observation}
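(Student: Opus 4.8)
The plan is to chain the two preceding lemmas together with the one-directional implication that 2-neighbourhood-resolving always entails fault-tolerance. First I would invoke Lemma~\ref{ft2NR}: since $G$ is a \emph{connected} cograph and $R$ is a fault-tolerant resolving set for $G$, the ``$\Rightarrow$'' direction of the equivalence immediately gives that $R$ is 2-neighbourhood-resolving for $G$.

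Next I would note that $\bar G$ is again a cograph, since the complement of a cograph is a cograph by definition, and that $\bar G$ is disconnected because a connected cograph is a join and the complement of a join is a disjoint union. Applying Lemma~\ref{comp1} to the cograph $G$ then yields that $R$ is 2-neighbourhood-resolving for $\bar G$ as well.

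Finally I would appeal to the remark made directly after Definition~\ref{2NR}, namely that every 2-neighbourhood-resolving set is a fault-tolerant resolving set -- an implication that holds for \emph{arbitrary} graphs, with no connectivity hypothesis. Applying it to $\bar G$ gives that $R$ is a fault-tolerant resolving set for $\bar G$, which is exactly the claim.

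The one point that requires care -- and which is really the heart of the statement -- is the asymmetry between connected and disconnected cographs. One must resist the temptation to apply the \emph{full} equivalence of Lemma~\ref{ft2NR} to $\bar G$: since $\bar G$ is disconnected, the direction ``fault-tolerant $\Rightarrow$ 2NR'' fails there, as illustrated in Figure~\ref{Fig1}. The argument must therefore be routed so that the equivalence is used only on the connected graph $G$, whereas on the disconnected graph $\bar G$ we rely solely on the universally valid implication ``2NR $\Rightarrow$ fault-tolerant''. Getting this ordering right is the only subtlety; the remaining steps are immediate applications of the cited results.
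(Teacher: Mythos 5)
Your proposal is correct and follows exactly the paper's (very brief) justification: apply the ``$\Rightarrow$'' direction of Lemma~\ref{ft2NR} to the connected graph $G$, transfer the 2-neighbourhood-resolving property to $\bar G$ via Lemma~\ref{comp1}, and conclude with the unconditional implication that 2-neighbourhood-resolving sets are fault-tolerant resolving sets. Your explicit warning about not applying the full equivalence of Lemma~\ref{ft2NR} to the disconnected graph $\bar G$ is precisely the subtlety the paper flags via Figure~\ref{Fig1}.
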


 Note that a fault-tolerant resolving set $R$ for a disconnected cograph $G$ is not necessarily a fault-tolerant resolving set for $\bar G$, see Figure \ref{Fig1}.

\begin{lemma}
\label{ftunion}
 Let $G'=(V',E')$ and $G''=(V'',E'')$ be two connected cographs and $G=(V,E) $ with $V=V'\cup V''$ and $E=E' \cup E''$ be the disjoint union of $G'$ and $G''$. Let $R'$ be a fault-tolerant resolving set for $G'$ and $R''$ be a fault-tolerant resolving set for $G''$. Then $R=R' \cup R''$ is a fault-tolerant resolving set for $G$.
\end{lemma}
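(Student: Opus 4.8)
The plan is to reduce everything to the equivalent characterization of fault-tolerant resolving sets stated above, namely that $R$ is fault-tolerant for $G$ if and only if every pair of distinct vertices of $G$ is resolved by at least two vertices of $R$. The entire argument then rests on one elementary fact about distances in a disjoint union, which uses only that $G'$ and $G''$ are connected: for two vertices lying in the same component the distance in $G$ equals the distance inside that component, whereas for two vertices lying in different components the distance in $G$ is $\infty$. Accordingly, I would fix an arbitrary pair $u,v \in V$ with $u \neq v$ and split into three cases according to whether both endpoints lie in $V'$, both lie in $V''$, or one lies in each.

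For the two same-component cases, say $u,v \in V'$, the key point is that the other component contributes nothing: every $w \in V''$ satisfies $d_G(w,u) = d_G(w,v) = \infty$ and therefore fails to resolve $u,v$. Hence only vertices of $V'$ can resolve the pair, and for such $w$ we have $d_G(w,u) = d_{G'}(w,u)$ and $d_G(w,v) = d_{G'}(w,v)$, so resolution in $G$ coincides with resolution in $G'$. Since $R'$ is fault-tolerant for $G'$, the pair is resolved by at least two vertices of $R' \subseteq R$, and these same two vertices resolve $u,v$ in $G$. The case $u,v \in V''$ is symmetric, using $R''$.

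For the cross case $u \in V'$, $v \in V''$, the situation is even more favourable: every $w \in R'$ has $d_G(w,u)$ finite but $d_G(w,v) = \infty$, and every $w \in R''$ has $d_G(w,u) = \infty$ but $d_G(w,v)$ finite, so in fact \emph{every} vertex of $R = R' \cup R''$ resolves $u,v$. It thus only remains to guarantee that $|R| \geq 2$, and this is where I expect the only genuine subtlety to sit. The inequality holds as soon as at least one of $G',G''$ has more than one vertex, because a fault-tolerant resolving set of a connected graph on at least two vertices already contains the two resolvers of some pair and hence has size at least two. The remaining degenerate situation, where both components are single vertices, is the one case that must be read with care, since then the cross pair still requires two resolvers; in every non-trivial instance the counting goes through and $R$ is fault-tolerant for $G$.
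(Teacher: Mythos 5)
Your proof is correct and follows essentially the same route as the paper's: the same three-case split ($u,v\in V'$; $u,v\in V''$; one in each), with same-component pairs resolved twice inside their own component and cross pairs resolved by every vertex of $R$ because one endpoint is at distance $\infty$. You are in fact slightly more careful than the paper, which silently assumes $|R|\geq 2$ in the cross case and does not flag the degenerate situation where both components are single vertices and $R'$ or $R''$ is (vacuously) empty.
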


\begin{proof}
 We show that every pair $u,v \in V$ is resolved by two vertices in $R$. If $u,v \in V_1$ or $u,v \in V_2$ the pair is obviously resolved twice by vertices in $R_1 \subseteq R$ or $R_2 \subseteq R$. If $u \in V_1$ and $v \in V_2$ the pair is resolved by any two resolving vertices $r_1,r_2 \in R$, since either $u$ or $v$ will have distance $\infty$ to $r_1$ and $r_2$.   
\end{proof}

Note that $R$ is not necessarily 2-neighbourhood-resolving for $G$ (see Figure \ref{Fig1}).

\begin{definition}
 Let $G=(V,E)$ be a cograph and $R\subseteq V$ a fault-tolerant resolving set for $G$. A vertex $v \in V$ is called a {\em $k$-vertex} with respect to $R$, $k \in \mathbb{N}$, if $|N[v] \cap R| = k $.
\end{definition} 

A vertex $v \in V$ is a $k$-vertex, if it has $k$ vertices in its closed neighbourhood that are in $R$.

% \begin{lemma}(falsch)
%  Let $G=(V,E)$ be a connected cograph. Set $R \subseteq V$ is 2NR (or a fault-tolerant resolving set) for $G$ if and only if
%  \begin{enumerate}
%   \item there is at most one $R_0$-vertex in $V$ and
%   \item there is no $R_0$-vertex in $V$ if there is a $R_1$-vertex in $V$ and
%   \item there is no $R_1$-vertex in $V$ if there is a $R_0$-vertex in $V$.
%  \end{enumerate}
% \end{lemma}

\begin{lemma}
\label{union}
  Let $G'=(V',E')$ and $G''=(V'',E'')$ be two connected cographs and $G=(V,E) $ with $V=V'\cup V''$ and $E=E' \cup E''$ be the disjoint union of $G'$ and $G''$. Let $R'$ be 2-neighbourhood-resolving for $G'$ and $R''$ be 2-neighbourhood-resolving for $G''$. Vertex set $R=R' \cup R''$ is 2-neighbourhood-resolving for $G$ if and only if
  \begin{enumerate}
   \item there is at most one $0$-vertex $v \in V$ with respect to $R$, i.e. there is no $0$-vertex $v \in V'$ with respect to $R'$ or there is no $0$-vertex $v \in V''$ with respect to $R''$ and
   \item there is no $0$-vertex $v \in V'$ with respect to $R'$, if there is a $1$-vertex in $V''$ with respect to $R''$ and
   \item there is no $1$-vertex in $V'$ with respect to $R'$, if there is a $0$-vertex in $V''$ with respect to $R''$.
  \end{enumerate}
\end{lemma}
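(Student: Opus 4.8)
The plan is to check the defining inequality $h_R(u,v)\ge 2$ for every pair of distinct vertices of $G$ directly, splitting the pairs according to whether both endpoints lie in one component or in different ones. The first thing I would record is that the disjoint union leaves neighbourhoods inside a component untouched: $N_G(u)=N_{G'}(u)\subseteq V'$ for $u\in V'$, $N_G(v)=N_{G''}(v)\subseteq V''$ for $v\in V''$, while $R\cap V'=R'$ and $R\cap V''=R''$. Hence for a pair $u,v\in V'$ the relevant set $(N_G(u)\triangle N_G(v)\cup\{u,v\})\cap R$ equals $(N_{G'}(u)\triangle N_{G'}(v)\cup\{u,v\})\cap R'$, so $h_R(u,v)=h_{R'}(u,v)\ge 2$ because $R'$ is 2-neighbourhood-resolving for $G'$; the same holds inside $V''$. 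Same-component pairs therefore impose no restriction, and the entire statement comes down to the cross pairs.

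For a cross pair $u\in V'$, $v\in V''$ I would perform the one genuine computation. Since $N_{G'}(u)\subseteq V'$ and $N_{G''}(v)\subseteq V''$ are disjoint, $N_G(u)\triangle N_G(v)=N_{G'}(u)\cup N_{G''}(v)$, so $N_G(u)\triangle N_G(v)\cup\{u,v\}=N_{G'}[u]\cup N_{G''}[v]$ is a disjoint union. Intersecting with $R=R'\cup R''$ and using disjointness yields $h_R(u,v)=|N_{G'}[u]\cap R'|+|N_{G''}[v]\cap R''|$; in other words, if $u$ is a $k_1$-vertex with respect to $R'$ and $v$ a $k_2$-vertex with respect to $R''$, then $h_R(u,v)=k_1+k_2$. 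Consequently $R$ is 2-neighbourhood-resolving for $G$ iff $k_1+k_2\ge 2$ for all cross pairs, and since $u$ and $v$ vary independently this is equivalent to $m'+m''\ge 2$, where $m'$ (resp.\ $m''$) is the smallest $k$-value occurring among the vertices of $V'$ with respect to $R'$ (resp.\ of $V''$ with respect to $R''$).

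The remaining and, I expect, only mildly delicate step is to rewrite the single inequality $m'+m''\ge 2$ as the three listed conditions. I would use that $m'=0$ exactly when $V'$ contains a $0$-vertex and $m'=1$ exactly when $V'$ has no $0$-vertex but does have a $1$-vertex, and likewise for $m''$. As $k$-values are non-negative, $m'+m''\ge 2$ fails precisely when $(m',m'')\in\{(0,0),(0,1),(1,0)\}$. The case $(0,0)$ means both components carry a $0$-vertex; since the $0$-vertices of $G$ are exactly those of $G'$ together with those of $G''$ (closed neighbourhoods being unchanged by the union), this is the negation of condition~1. The case $(0,1)$ means $V'$ has a $0$-vertex while $V''$ has a $1$-vertex and no $0$-vertex, the negation of condition~2, and $(1,0)$ is the negation of condition~3. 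To finish I would check the reverse absorption, e.g.\ if condition~2 fails because $V'$ has a $0$-vertex and $V''$ a $1$-vertex, then according to whether $V''$ also has a $0$-vertex one lands in configuration $(0,0)$ or $(0,1)$, so $m'+m''\le 1$ either way. This makes the conjunction of conditions 1--3 equivalent to $m'+m''\ge 2$, and together with the first two paragraphs it proves both directions of the lemma.
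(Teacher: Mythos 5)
Your proof is correct and follows essentially the same route as the paper: same-component pairs are handled by the hypotheses on $R'$ and $R''$, and for cross pairs $h_R(u,v)$ reduces to $|N[u]\cap R'|+|N[v]\cap R''|$, with the three conditions ruling out exactly the sums $0+0$, $0+1$, $1+0$. Your version is somewhat more explicit than the paper's (it derives the symmetric-difference identity for cross pairs and checks the equivalence of the three conditions with $m'+m''\ge 2$ in both directions), but the underlying argument is the same.
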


\begin{proof}
"$\Rightarrow$": Assume that $R$ is 2-neighbourhood-resolving for $G$.
\begin{enumerate}
 \item We show that there is at most one 0-vertex in $V$ with respect to $R$. Assume there are two 0-vertices $u,v \in V$ with respect to $R$, i.e. $|N[u] \cap R| = 0$ and $|N[v] \cap R| = 0$. Then we have $h_R(u,v) = 0$, what contradicts the assumption that $R$ is 2-neighbourhood-resolving.
 \item We show that there is no 0-vertex in $V'$ with respect to $R'$ if there is a $1$-vertex in $V''$ with respect to $R''$. Assume that there is a 0-vertex in $u \in V'$ with respect to $R'$ and a $1$-vertex in $v \in V''$ with respect to $R''$. Then we have $h_R(u,v)=1$, what contradicts the assumption that $R$ is 2-neighbourhood-resolving.
 \item analogous to 2. 
 \end{enumerate}

 "$\Leftarrow$": Assume that the conditions 1., 2. and 3. hold. We  show that $R$ is 2-neighbourhood-resolving for $G$, i.e. for $u,v \in V$ we have $h_R(u,v) \geq 2$. For $u,v \in V'$ we have $h_{R'}(u,v)\geq 2$ and therefore also $h_R(u,v) \geq 2$. The same holds for $u,v \in V''$. Now let $u \in V'$ and $v \in V''$. $h_R(u,v)<2$ if and only if $|N[u] \cap R| + |N[v] \cap R| <2$, i.e. if 
 \begin{enumerate}
  \item $|N[u] \cap R| = 0$ and $|N[v] \cap R| = 0$ or
  \item $|N[u] \cap R| = 0$ and $|N[v] \cap R| = 1$ or
  \item $|N[u] \cap R| = 1$ and $|N[v] \cap R| = 0$
 \end{enumerate}
Conditions 1. - 3. guarantee that none of these three cases appear. 
\end{proof}

\begin{theorem}
 Let $G=(V,E)$ be a cograph. The weighted fault-tolerant metric dimension of $G$ can be computed in linear time.
\end{theorem}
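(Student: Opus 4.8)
The plan is to build the \emph{cotree} of $G$ in linear time (via the algorithm of Corneil, Perl and Stewart) and then run a bottom-up dynamic program over it, storing for every subtree only a constant amount of information. Each internal node of the cotree is either a \emph{union} node (disjoint union of its children) or a \emph{join} node (the complement of a disjoint union); I would binarise the tree so that every internal node has exactly two children while keeping its size linear. Throughout I work with $2$-neighbourhood-resolving ($2$NR) sets rather than fault-tolerant ones: for a \emph{connected} cograph the two notions coincide by Lemma~\ref{ft2NR}, and $2$NR is complement-invariant by Lemma~\ref{comp1}, which is exactly what lets a join node be handled as a union node in the complement.

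The first structural fact I would establish is that resolving happens \emph{locally}. If $H$ is the union or the join of $H_1$ and $H_2$, then for a pair $u,v$ lying in the same part one checks directly that $h_R(u,v)$ equals the value computed inside that part; hence any $2$NR set for $H$ must restrict to a $2$NR set on each part, and only the \emph{cross pairs} $u\in V_1$, $v\in V_2$ need extra attention. For a union node a short computation gives $h_R(u,v)=|N[u]\cap R|+|N[v]\cap R|$, so all cross pairs are resolved iff $\min_{u\in V_1}|N[u]\cap R|+\min_{v\in V_2}|N[v]\cap R|\ge 2$, which is precisely Lemma~\ref{union} rephrased through minima. For a join node the symmetric computation (or, equivalently, passing to the complement and invoking Lemma~\ref{comp1} together with the union rule) shows that all cross pairs are resolved iff the analogous sum of minima taken in the \emph{complement} is at least $2$.

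This motivates the state of the dynamic program. For a subcograph $H$ and a candidate $2$NR set $R$, let $\kappa=\min_{v}|N_H[v]\cap R|$, let $\bar\kappa=\min_v|N_{\bar H}[v]\cap R|$, and let $\rho=|R|$, each capped at $2$. Note that $\kappa$ capped at $2$ records exactly whether a $0$-vertex or a $1$-vertex exists, which is all Lemma~\ref{union} ever consults (a $2$NR set has at most one $0$-vertex, and a $0$-vertex excludes $1$-vertices, so counts are irrelevant). I would store, for each triple $(\kappa,\bar\kappa,\rho)\in\{0,1,2\}^3$, the minimum weight of a $2$NR set of $H$ realising it; the base case is a single vertex with $R=\emptyset$ or $R=\{x\}$. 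The transitions are purely arithmetic on the capped quantities: a disjoint union leaves every $|N[v]\cap R|$ unchanged but adds $\rho_2$ (resp.\ $\rho_1$) to every complement-count of $V_1$ (resp.\ $V_2$), so $\kappa=\min(\kappa_1,\kappa_2)$, $\bar\kappa=\min(\bar\kappa_1+\rho_2,\bar\kappa_2+\rho_1)$, $\rho=\rho_1+\rho_2$ (all capped), and the combination is kept only when $\kappa_1+\kappa_2\ge 2$; a join node is identical with the roles of $\kappa$ and $\bar\kappa$ exchanged. Since every rule depends only on this triple, two sets with the same triple are interchangeable, so keeping one minimiser per triple is correct by an exchange argument; each node costs $O(1)$ and the whole pass is linear.

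Finally I would read off the answer at the root. If the root is a join node or a single vertex then $G$ is connected, fault-tolerance equals $2$NR by Lemma~\ref{ft2NR}, and the optimum is the smallest weight stored at the root. If the root is a union node then $G$ is disconnected and the two notions genuinely differ (as Figure~\ref{Fig1} warns), so here I would \emph{not} force $2$NR across components; instead, using Lemma~\ref{ftunion} and the fact that a cross-component pair is resolved by any $R$-vertex in either component, a set is fault-tolerant iff each component $G_i$ carries a $2$NR set and, writing $a_i=|R\cap V_i|$, we have $a_i+a_j\ge 2$ for all $i\ne j$. This reduces to either taking in every component its cheapest $2$NR set of size $\ge 1$, or leaving one isolated-vertex component empty ($a_i=0$) while every other component contributes its cheapest $2$NR set of size $\ge 2$; minimising over the $O(m)$ choices of the exceptional component finishes in linear time. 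I expect the main obstacle to be exactly this disconnected case: one must argue carefully that $2$NR is the correct invariant at every \emph{internal} node (since any internal union is ultimately nested inside a join and therefore must resolve its within-part pairs) while the \emph{root} union is governed by the weaker fault-tolerant condition, and one must verify that the capped triple $(\kappa,\bar\kappa,\rho)$ truly records everything the union and join rules can ever consult.
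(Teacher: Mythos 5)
Your proposal is correct and follows essentially the same route as the paper: a bottom-up dynamic program over the cotree that works with $2$-neighbourhood-resolving sets via Lemmas~\ref{ft2NR}, \ref{comp1} and \ref{union}, storing a constant-size state per node. Your triple $(\kappa,\bar\kappa,\rho)$ capped at $2$ is just a repackaging of the paper's four flags recording the existence of $0$-, $1$-, $(|R|-1)$- and $|R|$-vertices, and your treatment of the disconnected case at a root union node (via $a_i+a_j\ge 2$ and Lemma~\ref{ftunion}) matches the paper's preprocessing of components and isolated vertices.
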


\begin{proof}
 We describe a linear time algorithm for computing the weighted fault-tolerant metric dimension of a connected cograph. For disconnected cographs we apply the algorithm for every connected component with at least two vertices. If there are isolated vertices, then each of them has to be in every weighted fault-tolerant resolving set, except for the case that there is exactly one isolated vertex. To get the weighted fault-tolerant metric dimension of the disconnected input graph, we build the sum of the weights of all isolated vertices if there are at least two, and the weighted fault-tolerant metric dimension for each connected component with at least two vertices.
 
 To compute the weighted fault-tolerant metric dimension of a connected cograph $G=(V,E)$ it suffices to compute a set that is 2-neighbourhood-resolving for $G$ and has minimal costs, since fault-tolerant resolving and 2-neighbourhood-resolving sets are equivalent in connected cographs (Lemma \ref{ft2NR}). To compute a 2-neighbourhood-resolving set of minimum weight we use dynamic programming along the cotree $T=(V_T,E_T)$. The cotree $T$ of $G$ is a tree that describes the union and complementation of cographs. The inner nodes are either complementation-nodes or union-nodes. Every complementation-node has exactly one child and every union-node has exactly two children. The leafs of $T$ are the vertices of $G$.
  
 For every inner node of $T$ we compute bottom up different types of minimum weight 2-neighbourhood-resolving sets for the corresponding subgraph of $G$. First we compute the 2-neighbourhood-resolving sets for the fathers of the leafs. For every other inner node $v \in V_T$ we compute the 2-neighbourhood-resolving sets from the 2-neighbourhood-resolving sets of all children of $v$. Finally the minimum weight of all 2-neighbourhood-resolving sets at root $r$ of $T$ will be the minimum weight fault-tolerant metric dimension of $G$. From Lemma \ref{comp1} we know that, if a set is 2-neighbourhood-resolving for a cograph $G'$ then it is also 2-neighbourhood-resolving for $\bar G'$. The union of two fault-tolerant resolving sets is also a fault-tolerant resolving set (Lemma $\ref{ftunion}$), but the union of two 2-neighbourhood-resolving sets is not necessarily a 2-neighbourhood-resolving set. We have to guarantee that the union of two 2-neighbourhood-resolving sets is also 2-neighbourhood-resolving, according to Lemma \ref{union}. For this, we have to keep track of the existance of $0$- and $1$-vertices in the 2-neighbourhood-resolving sets that we compute. Since a $0$- or $1$-vertex with respect to a set $R$ becomes an $|R|$ or $(|R|-1)$-vertex when complementing, we also have to keep track of $|R|$- and $(|R|-1)$-vertices.\\
 For a cograph $G=(V,E)$ we define 16 types of minimum weight 2-neighbourhood-resolving sets $R_{a,b,c,d}$, $a,b,c,d \in \{0,1\}$.\\
 For 
 \begin{itemize}
  \item $a=1$ we compute a minimum weight 2-neighbourhood-resolving set $R$ for $G$ such that there is a 0-vertex in $G$ with respect to $R$ and for $a=0$ we compute a minimum weight 2-neighbourhood-resolving set for $G$ such that there is no 0-vertex in $G$ with respect to $R$.
  \item $b=1$ we compute a minimum weight 2-neighbourhood-resolving set $R$ for $G$ such that there is a 1-vertex in $G$ with respect to $R$ and for $b=0$ we compute a minimum weight 2-neighbourhood-resolving set for $G$ such that there is no 1-vertex in $G$ with respect to $R$.
  \item $c=1$ we compute a minimum weight 2-neighbourhood-resolving set $R$ for $G$ such that there is a $(|R|-1)$-vertex in $G$ with respect to $R$ and for $c=0$ we compute a minimum weight 2-neighbourhood-resolving set for $G$ such that there is no $(|R|-1)$-vertex in $G$ with respect to $R$.
  \item $d=1$ we compute a minimum weight 2-neighbourhood-resolving set $R$ for $G$ such that there is a $|R|$-vertex in $G$ with respect to $R$ and for $d=0$ we compute a minimum weight 2-neighbourhood-resolving set for $G$ such that there is no $|R|$-vertex in $G$ with respect to $R$.
 \end{itemize}

 Let $r_{a,b,c,d}$ be the weight of the corresponding minimum weight 2-neighbourhood-resolving sets $R_{a,b,c,d}$, i.e. the sum of the weights of all vertices in $R_{a,b,c,d}$. If there is no such 2-neighbourhood-resolving set for a certain $a,b,c,d$, we set $r_{a,b,c,d} = \infty$ and $R_{a,b,c,d}=undefined$.\\
 
 Now we will analyze the 16 2-neighbourhood-resolving sets more detailed and describe, how they can be computed efficiently bottom up along the cotree. First one should note that $r_{1,1,c,d}=\infty$, $\forall c,d$, and $R_{1,1,c,d}=undefined$, since it is not possible to have a $0$- and $1$-vertex with respect to $R$ in a 2-neighbourhood-resolving set (their symmetric difference would contain less than two resolving vertices), so it suffices to focus on the remaining 12 sets.
 
 When complementing a graph $G$, the role of a $0$-vertex and $|R|$-vertex with respect to $R$ and the role of a $1$-vertex and a $(|R|-1)$-vertex with respect to $R$ changes, that is $R_{a,b,c,d}$ for $G$ is $R_{d,c,b,a}$ for $\bar G$. When unifying two cographs $G_1$ and $G_2$ we distinguish between the follwing three cases:
 \begin{enumerate}
  \item $G_1$ and $G_2$ both consist of a single vertex
  \item $G_1$ consists of a single vertex and $G_2$ of at least two vertices
  \item $G_1$ and $G_2$ both consist of at least 2 vertices
 \end{enumerate} 

 We will describe now how to compute $R_{a,b,c,d}$ for the three cases.
 
 \begin{enumerate}
  \item Let $G_1=(\{v_1 \}, \emptyset)$ and $G_2=(\{v_2\}, \emptyset)$. Then there is exactly one valid 2-neighbourhood-resolving set for $G=G_1 \cup G_2$, namely $R=\{v_1,v_2\}$. In $G$ we have no $0$-vertex, two $1$- and two $(|R|-1)$-vertices and no $|R|$-vertex with respect to $R$. Therefore $R_{0,1,1,0}=\{v_1,v_2\}$, $r_{0,1,1,0}= c(v_1) + c(v_2)$ and all other sets are infeasible, that is $r_{a,b,c,d}=\infty$ and $R_{a,b,c,d}=undefined$ for $a\neq 0 \vee b \neq 1 \vee c\neq 1 \vee d\neq0$.
  
  \item Let $G_1= ( \{ v_1 \},\emptyset)$ and $G_2=(V_2,E_2)$ with $|V_2| \geq 2$. For some $a,b,c,d \in\{0,1\}$ let $R''_{a,b,c,d}$ be the minimum weight 2-neighbourhood-resolving sets for $G_2$ and $r''_{a,b,c,d}$ be their weights. Let $G=G_1 \cup G_2$. $r_{0,0,c,d} = \infty$ and $R_{0,0,c,d} = undefined$, because vertex $v_1$ is either a $0$-vertex (if it is not in the 2-neighbourhood-resolving set) or a $1$-vertex (if it is in the 2-neighbourhood-resolving set) with respect to $R_{0,0,c,d}$, $\forall c,d$. $r_{0,1,c,1} = \infty$ and $R_{0,1,c,1} = undefined$, because it is crucial to put $v_1$ in the 2-neighbourhood-resolving set, if there should be no $0$-vertex in $G$ with respect to $R_{0,1,c,1}$, $\forall c$. If $v_1$ is in the 2-neighbourhood-resolving set, it is not possible to have a vertex that is neighbour of all resolving vertices, because $v_1$ has no neighbours. For $R_{0,1,0,0}$ and $R_{0,1,1,0}$ we have to put $v_1$ in the 2-neighbourhood-resolving set, so that there is no $0$-vertex with respect to $R_{0,1,0,0}$ or $R_{0,1,1,0}$, what makes $v_1$ become a $1$-vertex in $G$ with respect to $R_{0,1,0,0}$ or $R_{0,1,1,0}$. We get $r_{0,1,0,0} = c(v_1) + \min \{r''_{0,0,0,0}, r''_{0,0,1,0}, r''_{0,1,0,0}, r''_{0,1,1,0}\}$ and thus  $R_{0,1,0,0}= \{v_1\} \cup R_m$, whereas $R_m$ is the set with smallest weight out of $\{R''_{0,0,0,0}, R''_{0,0,1,0}, R''_{0,1,0,0}, R''_{0,1,1,0}\}$. For $R_{0,1,1,0}$ there has to be an $|R_{0,1,1,0}|$-vertex in $G_2$ with respect to $R_{0,1,1,0}$, so we get $r_{0,1,1,0} = c(v_1) + \min \{r''_{0,0,0,1}, r''_{0,0,1,1}, r''_{0,1,0,1}, r''_{0,1,1,1}\}$ and thus $R_{0,1,1,0} = \{v_1\} \cup R_m$, whereas $R_m$ is the set with smallest weight out of $\{R''_{0,0,0,1}, R''_{0,0,1,1}, R''_{0,1,0,1}, R''_{0,1,1,1}\}$. For $R_{1,0,c,d}$ it is not possible to put $v_1$ in the 2-neighbourhood-resolving set, because it would become a $1$-vertex with respect to $R_{1,0,c,d}$, $\forall c,d$. Therefore we get $r_{1,0,0,0} = r''_{0,0,0,0}$ and thus $R_{1,0,0,0} = R''_{0,0,0,0}$, $r_{1,0,0,1} = r''_{0,0,0,1}$ and thus $R_{1,0,0,1} = R''_{0,0,0,1}$, $r_{1,0,1,0} = r''_{0,0,1,0}$ and thus $R_{1,0,1,0} = R''_{0,0,1,0}$, $r_{1,0,1,1} = r''_{0,0,1,1}$ and thus $R_{1,0,1,1} = R''_{0,0,1,1}$.
  
  \item Let $G_1=(V_1,E_1)$ and $G_2=(V_2,E_2)$ with $|V_1| \geq 2$ and $|V_2| \geq 2$ and  $G=G_1 \cup G_2$. For some $a,b,c,d \in \{0,1\}$ let $R'_{a,b,c,d}$ be the minimum weight 2-neighbourhood-resolving sets for $G_1$ and $R''_{a,b,c,d}$ be the minimum weight 2-neighbourhood-resolving sets for $G_2$ and $r'_{a,b,c,d}$ and $r''_{a,b,c,d}$ be their weights. $r_{a,b,c,1}= \infty$ and $r_{a,b,1,d}=\infty$ and thus $R_{a,b,c,1}= undefined$ and $R_{a,b,1,d}=undefined$, $\forall a,b,c,d$, because $G_1$ and $G_2$ contain at least two resolving vertices in every 2-neighbourhood-resolving set. Therefore it is not possible to have a vertex that is neighbour of all or of all except one of them. The three remaining sets are $R_{0,0,0,0}, R_{0,1,0,0}, R_{1,0,0,0}$. 
  We get $r_{0,0,0,0} = \min \{r'_{0,0,c,d} | c,d \in \{0,1\}\} + \min \{r''_{0,0,c,d} | c,d \in \{0,1\}\}$ and thus $R_{0,0,0,0} = R'_m \cup R''_m$, whereas $R'_m$ is the set with smallest weight out of $\{R'_{0,0,c,d} | c,d \in \{0,1\}\}$ and $R''_m$ is the set with smallest weight out of $\{R''_{0,0,c,d} | c,d \in \{0,1\}\}$. 
  We get $r_{0,1,0,0} = \min \{r'_{0,0,c,d} + r''_{0,1,c',d'}, r'_{0,1,c,d} + r''_{0,0,c',d'}, r'_{0,1,c,d} + r''_{0,1,c',d'} | c,d,\linebreak  c',d' \in \{0,1\} \}$ and thus $R_{0,1,0,0} = \min \{ R'_{m_0} \cup  R''_{m_1}, R'_{m_1} \cup R''_{m_0}, R'_{m_1} \cup R''_{m_1} \}$, whereas $R'_{m_0}$ is the set with smallest weight out of $\{R'_{0,0,c,d} | c,d \in \{0,1\}\}$, $R'_{m_1}$ is the set with smallest weight out of $\{R'_{0,1,c,d} | c,d \in \{0,1\}\}$, $R''_{m_0}$ is the set with smallest weight out of $\{R''_{0,0,c,d} | c,d \in \{0,1\}\}$ and $R''_{m_1}$ is the set with smallest weight out of $\{R''_{0,1,c,d} | c,d \in \{0,1\}\}$. 
  We get $r_{1,0,0,0} = \min \{r'_{1,0,c,d} + r''_{0,0,c',d'}, r'_{0,0,c,d} + r''_{1,0,c',d'} | c,d, c',d' \in \{0,1\} \}$ and thus $R_{1,0,0,0} = \min \{R'_{m_1} \cup R''_{m_0}, R'_{m_0} \cup R''_{m_1}\}$, whereas $R'_{m_0}$ is the set with smallest weight out of $\{R'_{0,0,c,d} | c,d \in \{0,1\}\}$, $R'_{m_1}$ is the set with smallest weight out of $\{R'_{1,0,c,d} | c,d \in \{0,1\}\}$, $R''_{m_0}$ is the set with smallest weight out of $\{R''_{0,0,c,d} | c,\linebreak d \in \{0,1\}\}$ and $R''_{m_1}$ is the set with smallest weight out of $\{R''_{1,0,c,d} | c,d \in \{0,1\}\}$.
 \end{enumerate}
 
 For every node of the cotree $T$ the computation of the 12 minimum weight 2-neighbourhood-resolving sets for the corresponding subgraph of $G$ can be done in a constant number of steps. Since $T$ has $\mathcal{O}(n)$ nodes, the overall runtime of our algorithm is linear to the size of the cotree.
 
\end{proof}

\section{Conclusion}
We showed that the weighted fault-tolerant metric dimension problem can be solved in linear time on cographs. Our algorithm computes the costs of a fault-tolerant resolving set with minimum weight as well as the set itself.

The complexity of computing the (weighted) fault-tolerant metric dimension is still unknown even for graph classes like wheels and sun graphs. This is something that we will investigate in further work.

% 
% ---- Bibliography ----
%
% BibTeX users should specify bibliography style 'splncs04'.
% References will then be sorted and formatted in the correct style.

\bibliographystyle{splncs04}
\bibliography{literature}

\begin{thebibliography}{10}
\providecommand{\url}[1]{\texttt{#1}}
\providecommand{\urlprefix}{URL }
\providecommand{\doi}[1]{https://doi.org/#1}

\bibitem{belmonte2017metric}
Belmonte, R., Fomin, F.V., Golovach, P.A., Ramanujan, M.: Metric dimension of
  bounded tree-length graphs. SIAM Journal on Discrete Mathematics
  \textbf{31}(2),  1217--1243 (2017)

\bibitem{CGH08}
Chappell, G., Gimbel, J., Hartman, C.: Bounds on the metric and partition
  dimensions of a graph. Ars Combinatoria  \textbf{88} (2008)

\bibitem{CEJO00}
Chartrand, G., Eroh, L., Johnson, M., Oellermann, O.: Resolvability in graphs
  and the metric dimension of a graph. Discrete Applied Mathematics
  \textbf{105}(1-3),  99--113 (2000)

\bibitem{CPZ00}
Chartrand, G., Poisson, C., Zhang, P.: Resolvability and the upper dimension of
  graphs. Computers and Mathematics with Applications  \textbf{39}(12),  19--28
  (2000)

\bibitem{DPSL12}
D\'{\i}az, J., Pottonen, O., Serna, M., van Leeuwen, E.: On the complexity of
  metric dimension. In: Epstein, L., Ferragina, P. (eds.) ESA. Lecture Notes in
  Computer Science, vol.~7501, pp. 419--430. Springer (2012)

\bibitem{epstein2015weighted}
Epstein, L., Levin, A., Woeginger, G.J.: The (weighted) metric dimension of
  graphs: hard and easy cases. Algorithmica  \textbf{72}(4),  1130--1171 (2015)

\bibitem{estrada2013k}
Estrada-Moreno, A., Rodr{\'\i}guez-Vel{\'a}zquez, J.A., Yero, I.G.: The
  k-metric dimension of a graph. arXiv preprint arXiv:1312.6840  (2013)

\bibitem{fernau2015computing}
Fernau, H., Heggernes, P., van't Hof, P., Meister, D., Saei, R.: Computing the
  metric dimension for chain graphs. Information Processing Letters
  \textbf{115}(9),  671--676 (2015)

\bibitem{foucaud2015algorithms}
Foucaud, F., Mertzios, G.B., Naserasr, R., Parreau, A., Valicov, P.: Algorithms
  and complexity for metric dimension and location-domination on interval and
  permutation graphs. In: International Workshop on Graph-Theoretic Concepts in
  Computer Science. pp. 456--471. Springer (2015)

\bibitem{foucaud2017identification}
Foucaud, F., Mertzios, G.B., Naserasr, R., Parreau, A., Valicov, P.:
  Identification, location--domination and metric dimension on interval and
  permutation graphs. i. bounds. Theoretical Computer Science  \textbf{668},
  43--58 (2017)

\bibitem{GJ79}
Garey, M., Johnson, D.: Computers and Intractability: A Guide to the Theory of
  NP-Completeness. W.H.~Freeman (1979)

\bibitem{HM76}
Harary, F., Melter, R.: On the metric dimension of a graph. Ars Combinatoria
  \textbf{2},  191--195 (1976)

\bibitem{hartung2013parameterized}
Hartung, S., Nichterlein, A.: On the parameterized and approximation hardness
  of metric dimension. In: Computational Complexity (CCC), 2013 IEEE Conference
  on. pp. 266--276. IEEE (2013)

\bibitem{HSV12}
Hauptmann, M., Schmied, R., Viehmann, C.: Approximation complexity of metric
  dimension problem. Journal of Discrete Algorithms  \textbf{14},  214--222
  (2012)

\bibitem{hernando2008fault}
Hernando, C., Mora, M., Slater, P.J., Wood, D.R.: Fault-tolerant metric
  dimension of graphs. Convexity in discrete structures  \textbf{5},  81--85
  (2008)

\bibitem{HMPSCP05}
Hernando, M., Mora, M., Pelayo, I., Seara, C., C{\'a}ceres, J., Puertas, M.: On
  the metric dimension of some families of graphs. Electronic Notes in Discrete
  Mathematics  \textbf{22},  129--133 (2005)

\bibitem{HW12}
Hoffmann, S., Wanke, E.: Metric dimension for gabriel unit disk graphs is
  {NP}-complete. In: Bar-Noy, A., Halld{\'o}rsson, M. (eds.) ALGOSENSORS.
  Lecture Notes in Computer Science, vol.~7718, pp. 90--92. Springer (2012)

\bibitem{hoffmann2016linear}
Hoffmann, S., Elterman, A., Wanke, E.: A linear time algorithm for metric
  dimension of cactus block graphs. Theoretical Computer Science  \textbf{630},
   43--62 (2016)

\bibitem{IBSS10}
Iswadi, H., Baskoro, E., Salman, A., Simanjuntak, R.: The metric dimension of
  amalgamation of cycles. Far East Journal of Mathematical Sciences (FJMS)
  \textbf{41}(1),  19--31 (2010)

\bibitem{KRR96}
Khuller, S., Raghavachari, B., Rosenfeld, A.: Landmarks in graphs. Discrete
  Applied Mathematics  \textbf{70},  217--229 (1996)

\bibitem{MT84}
Melter, R., Tomescu, I.: Metric bases in digital geometry. Computer Vision,
  Graphics, and Image Processing  \textbf{25}(1),  113--121 (1984)

\bibitem{oellermann2007strong}
Oellermann, O.R., Peters-Fransen, J.: The strong metric dimension of graphs and
  digraphs. Discrete Applied Mathematics  \textbf{155}(3),  356--364 (2007)

\bibitem{SBSSB11}
Saputro, S., Baskoro, E., Salman, A., Suprijanto, D., Baca, A.: The metric
  dimension of regular bipartite graphs. arXiv/1101.3624  (2011),
  \url{http://arxiv.org/abs/1101.3624}

\bibitem{ST04}
Seb{\"o}, A., Tannier, E.: On metric generators of graphs. Mathematics of
  Operations Research  \textbf{29}(2),  383--393 (2004)

\bibitem{Sla75}
Slater, P.: Leaves of trees. Congressum Numerantium  \textbf{14},  549--559
  (1975)

\bibitem{VHW19}
Vietz, D., Hoffmann, S., Wanke, E.: Computing the metric dimension by
  decomposing graphs into extended biconnected components. In: Das, G.K.,
  Mandal, P.S., Mukhopadhyaya, K., Nakano, S.i. (eds.) WALCOM: Algorithms and
  Computation. pp. 175--187. Springer International Publishing, Cham (2019)

\end{thebibliography}

\end{document}